\newtheorem{Thm}{Theorem}
\theoremstyle{definition}
\newcommand{\bra}[1]{{\left\langle #1 \right|}}
\newcommand{\ket}[1]{{\left| #1 \right\rangle}}
\newcommand{\C}{\mbox{$\mathbb C$}}
\newcommand{\T}{\mbox{$\mathrm{tr}$}}
\begin{document}
%%%%%%%%%%%%%%%%%%%%%%%%%%%%%%%%%%%%%%%%%%%%%%%%%%%%%%%%%%%%%%%%%%%%%%%%%%
%                                                                        %
%                                 Title                                  %
%                                                                        %
%%%%%%%%%%%%%%%%%%%%%%%%%%%%%%%%%%%%%%%%%%%%%%%%%%%%%%%%%%%%%%%%%%%%%%%%%%
\title{Polygamy of Entanglement in Multipartite Quantum Systems}

\author{Jeong San Kim}
\email{jkim@qis.ucalgary.ca}
\affiliation{
 Institute for Quantum Information Science,
 University of Calgary, Alberta T2N 1N4, Canada
}
\date{\today}

%%%%%%%%%%%%%%%%%%%%%%%%%%%%%%%%%%%%%%%%%%%%%%%%%%%%%%%%%%%%%%%%%%%%%%%%%%
%                                                                        %
%                              Abstract                                  %
%                                                                        %
%%%%%%%%%%%%%%%%%%%%%%%%%%%%%%%%%%%%%%%%%%%%%%%%%%%%%%%%%%%%%%%%%%%%%%%%%%
\begin{abstract}
We show that bipartite entanglement distribution (or entanglement of assistance) in multipartite
quantum systems is
by nature polygamous. We first provide an analytic upper bound for the concurrence of assistance
in bipartite quantum systems, and derive a polygamy inequality of multipartite
entanglement in arbitrary dimensional quantum systems.
\end{abstract}

\pacs{
03.67.Mn,  % Entanglement production, characterization and manipulation
03.65.Ud % Entanglement and quantum non-locality
}
%\keywords{}
\maketitle

Whereas quantum entanglement in bipartite quantum system has been intensively studied with
various applications, entanglement in multipartite quantum systems still seems far from
rich understanding.
One of the most distinct phenomena of quantum entanglement in multi-party systems is that
it cannot be freely shared among parties. For example, if two parties share a maximally entangled state,
they cannot have entanglement,
nor even classical correlations with any other party.
This is known as the {\em Monogamy of Entanglement} (MoE): The entanglement between one party and all others in
multipartite quantum systems bounds the sum of entanglement between one and each of the others.
MoE was shown to have a mathematical characterization in forms of inequalities in multi-qubit systems~\cite{ckw, ov}
using {\em concurrence}~\cite{ww} to quantify the shared entanglement among subsystems.

However, monogamy inequality using concurrence is know to fail in its generalization
for higher-dimensional quantum systems. In other words, the existence of quantum states violating
concurrence-based monogamy inequality was shown in higher-dimensional systems~\cite{ou, kds}.
Later, it was shown that those counterexamples of concurrence-based monogamy inequality still show
monogamous property of entanglement by using a different entanglement measure~\cite{kds}, and this
exposes the importance of having a proper way of quantifying entanglement.

Whereas, monogamy inequality is about the restricted sharability of multipartite entanglement,
entanglement distribution, which can be considered as a dual concept to the sharable entanglement,
is known to have a polygamous property (sometimes referred as {\em dual monogamy}) in multipartite quantum systems.
A mathematical characterization for the {\em Polygamy of Entanglement} (PoE) was first provided for multi-qubit
systems~\cite{gms, gbs} using {\em Concurrence of Assistance} (CoA)~\cite{lve}.
Recently, polygamy inequality was also shown in tripartite quantum system of arbitrary dimension using
{\em Entanglement of Assistance} (EoA) for the quantification of entanglement distribution~\cite{bgk}.
However, a general polygamy inequality of entanglement in multipartite higher-dimensional quantum system is still
an open question.

Here, we provide a strong clue for this question. We provide an analytical upper bound of CoA for arbitrary
bipartite mixed states, and derive a polygamy inequality of multipartite entanglement in arbitrary dimensional quantum systems.
The upper bound is saturated for any two-qubit states, and thus the derived polygamy inequality coincides with
the one proposed in~\cite{gbs} for multi-qubit systems.

For any bipartite pure state $\ket \phi_{AB}$,
%in a $d\otimes d'$ ($d\le d'$) quantum system,
its concurrence, $\mathcal{C}(\ket
\phi_{AB})$ is defined as~\cite{ww}
\begin{equation}
\mathcal{C}(\ket \phi_{AB})=\sqrt{2(1-\T\rho^2_A)},
\label{pure state concurrence}
\end{equation}
where $\rho_A=\T_B(\ket \phi_{AB}\bra \phi)$.
For any mixed state $\rho_{AB}$, its concurrence is defined as
\begin{equation}
\mathcal{C}(\rho_{AB})=\min \sum_k p_k \mathcal{C}({\ket {\phi_k}}_{AB}),
\label{mixed state concurrence}
\end{equation}
and its CoA is
\begin{equation}
\mathcal{C}^a(\rho_{AB})=\max \sum_k p_k \mathcal{C}({\ket {\phi_k}}_{AB}),
\label{CoA}
\end{equation}
where the minimum and maximum are taken over all possible pure state
decompositions, $\rho_{AB}=\sum_kp_k{\ket {\phi_k}}_{AB}\bra
\phi_k$.

For a three-qubit state $\ket{\psi}_{ABC}$, a polygamy inequality
of entanglement was first introduced as~\cite{gms}
\begin{equation}
\mathcal{C}_{A(BC)}^2\le(\mathcal{C}_{AB}^a)^2+(\mathcal{C}_{AC}^a)^2,
\label{3dual}
\end{equation}
where $\mathcal{C}_{A(BC)}=\mathcal{C}(\ket{\psi}_{A(BC)})$ is the concurrence of a 3-qubit state $\ket{\psi}_{A(BC)}$ for
a bipartite cut of subsystems between $A$ and $BC$ and $\mathcal{C}^a_{AB}=\mathcal{C}^a(\rho_{AB})$
with $\rho_{AB}=\T_{C}\left( \ket{\psi}_{ABC}\bra{\psi}\right)$.
Later, a generalization of Eq.~(\ref{3dual}) into $n$-qubit systems~\cite{gbs}
\begin{equation}
\mathcal{C}_{A_1 (A_2 \cdots A_n)}^2  \leq  (\mathcal{C}^a_{A_1 A_2})^2 +\cdots+(\mathcal{C}^a_{A_1 A_n})^2,
\label{ndual}
\end{equation}
was also introduced
for an arbitrary $n$-qubit pure state $\ket{\psi}_{A_1\cdots A_n} \in {\left(\C^2\right)}^{\otimes n}$.

Now, let us consider a bipartite pure state of arbitrary dimension
$\ket{\psi}_{AB}=\sum_{i=1}^{d_1}\sum_{k=1}^{d_2}a_{ik}\ket{ik}_{AB}$ in
%$\mathcal H^{A} \otimes \mathcal H^{B}$, where
$\mathcal H^{A} \otimes \mathcal H^{B} \simeq \C^{d_1}\otimes \C^{d_2}$.
In terms of the coefficients of $\ket
{\psi}_{AB}$,  its concurrence can also be expressed as~\cite{A}
\begin{align}
\mathcal{C}^2(\ket \psi_{AB})=&2(1-\T\rho^2_A)\nonumber\\
=&4\sum_{i<j}^{d_1}\sum_{k<l}^{d_2}|a_{ik}a_{jl}-a_{il}a_{jk}|^2,
\label{pure state concurrence2}
\end{align}
where $\rho_A=\T_B(\ket \psi_{AB}\bra \psi)$.

Let $m$ and $n$ be ordered pairs such that
\begin{equation}
m=(i,j),~ n=(k,l),~ i<j,~ k<l,
\label{pairs}
\end{equation}
with $i, j = 1, \cdots, d_1,~ k, l=1,\cdots, d_2$.
As $m$ has $D_1=d_1(d_1-1)/2$ choices of taking $i$ and $j$ from $d_1$ elements,
and similarly $D_2=d_2(d_2-1)/2$ choices for $n$, with some appropriate orderings of
$(i, j)$ and $(k, l)$, we can label $m$ and $n$ as
\begin{equation}
m=1,\cdots, D_1,~ n=1,\cdots, D_2.
\label{relabel}
\end{equation}
We also let
%$L_A^m$ and $L_B^n$ be
%generators of group $SO(d_1)$ and $SO(d_2)$ respectively, that is,
\begin{align}
L_A^m=&P_A^m\left(-\ket{i}_A\bra{j}+\ket{j}_A\bra{i}\right){P_A^m},\nonumber\\
L_B^n=&P_B^n\left(-\ket{k}_B\bra{l}+\ket{l}_B\bra{k}\right){P_B^n},
\label{sogroup}
\end{align}
where $P_A^m=\ket{i}_A\bra{i}+\ket{j}_A\bra{j}$ and $P_B^n=\ket{k}_B\bra{k}+\ket{l}_B\bra{l}$
are the projections onto the subspaces spanned by $\{ \ket{i}_A, \ket{j}_A \}$ and
$\{ \ket{k}_B, \ket{l}_B \}$ respectively.
By straightforward calculation, we have
\begin{equation}
|\langle\psi|(L_A^{m}\otimes L_B^{n})|\psi^*\rangle|^2=4|a_{ik}a_{jl}-a_{il}a_{jk}|^2,
\label{subconcurrence}
\end{equation}
%where $|\widetilde{\psi}_{mn}\rangle=(L_{m}\otimes L_{n})|\psi^*\rangle$,
and together with Eq.~(\ref{pure state concurrence2}), we have
\begin{equation}
\mathcal{C}^2(\ket \psi_{AB})=\sum_{m=1}^{D_1}\sum_{n=1}^{D_2}
|\langle\psi|(L_A^{m}\otimes L_B^{n})|\psi^*\rangle|^2.
\label{pureconcurrence3}
\end{equation}

Eq.~(\ref{subconcurrence}) can be considered as the squared concurrence of
the pure state (possibly unnormalized)
\begin{align}
\left(P_A^m \otimes P_B^n \right)\ket{\psi}_{AB}=&a_{ik}\ket{ik}_{AB}+a_{il}\ket{il}_{AB}\nonumber\\
&+a_{jk}\ket{jk}_{AB}+a_{jl}\ket{jl}_{AB}
\end{align}
in two-dimensional subspaces of $\mathcal H^{A}$ and $\mathcal H^{B}$ spanned by $\{ \ket{i}_A, \ket{j}_A \}$ and
$\{ \ket{k}_B, \ket{l}_B \}$ respectively.
%which is a two-qubit subspace.
Furthermore, Eq.~(\ref{pureconcurrence3}) implies that the concurrence of a bipartite pure state $\ket \psi_{AB}$
can be decomposed into the concurrences of two-qubit subspaces in Eq.~(\ref{subconcurrence}).

For a mixed state $\rho_{AB}=\sum_{i}p_i\ket{\psi_i}_{AB}\bra{\psi_i}=
\sum_{i}\ket{\xi_i}_{AB}\bra{\xi_i}$ with $\ket{\xi_i}_{AB}=\sqrt{p_i}\ket{\psi_i}_{AB}$,
its average concurrence is
\begin{align}
\sum_{i}p_i\mathcal C(\ket{\psi_i})=&\sum_{i}p_i \left(\sum_{m,n}
|\langle\psi_i|(L_A^{m}\otimes L_B^{n})|\psi_i^*\rangle|^2  \right)^{\frac{1}{2}}\nonumber\\
=&\sum_{i}\left(\sum_{m,n}
|\langle\xi_i|(L_A^{m}\otimes L_B^{n})|\xi_i^*\rangle|^2 \right)^{\frac{1}{2}}\nonumber\\
\leq& \sum_{i}\sum_{m,n}|\langle\xi_i|(L_A^{m}\otimes L_B^{n})|\xi_i^*\rangle|,
\label{ave}
\end{align}
and thus its CoA is
\begin{align}
\mathcal{C}^a\left(\rho_{AB}\right)=& \max\sum_{i}p_i\mathcal C(\ket{\psi_i})\nonumber\\
\leq& \max \sum_{i}\sum_{m,n}|\langle\xi_i|(L_A^{m}\otimes L_B^{n})|\xi_i^*\rangle|\nonumber\\
=&\sum_{m,n} \left( \max \sum_{i}|\langle\xi_i|(L_A^{m}\otimes L_B^{n})|\xi_i^*\rangle|\right),
\label{uppercoa1}
\end{align}
where the maxima are taken over all possible pure state decompositions of $\rho_{AB}$.
Here, we note that the term after the maximum in the last line of Eq.~(\ref{uppercoa1})
is the average concurrence of the state (possibly unnormalized)
\begin{equation}
\left(\rho_{AB}\right)_{mn}=\left(P_A^m \otimes P_B^n \right)\rho_{AB}\left(P_A^m \otimes P_B^n \right)
\end{equation}
in the two-qubit subspace spanned by  $\{ \ket{i}_A, \ket{j}_A \}$ and
$\{ \ket{k}_B, \ket{l}_B \}$.
Thus, the maximum value
\begin{equation}
\max \sum_{i}|\langle\xi_i|(L_A^{m}\otimes L_B^{n})|\xi_i^*\rangle|
\label{subCoA}
\end{equation}
can be considered as the CoA of the two-qubit state $\left(\rho_{AB}\right)_{mn}$;
therefore, by the optimization methods for CoA in two-qubit systems~\cite{lve}, we have
\begin{align}
\mathcal{C}^a \left(\left(\rho_{AB}\right)_{mn} \right)=&\max \sum_{i}|\langle\xi_i|(L_A^{m}\otimes L_B^{n})|\xi_i^*\rangle|\nonumber\\
%=&\T\sqrt{\sqrt{\rho_{AB}} \left(\widetilde{\rho}_{AB}\right)_{mn}  \sqrt{\rho_{AB}}}\nonumber\\
=&\mathcal F \left[\rho_{AB}, \left(\widetilde{\rho}_{AB}\right)_{mn} \right],
\label{2CoA}
\end{align}
where $\left(\widetilde{\rho}_{AB}\right)_{mn}=\left(L_A^m \otimes L_B^n\right)
\rho_{AB}^*\left(L_A^m \otimes L_B^n\right)$, and
$\mathcal F \left[\rho_{AB}, \left(\widetilde{\rho}_{AB}\right)_{mn} \right]$ is the
fidelity of $\rho_{AB}$ and $\left(\widetilde{\rho}_{AB}\right)_{mn}$ defined as
\begin{equation}
\mathcal F \left[\rho_{AB}, \left(\widetilde{\rho}_{AB}\right)_{mn} \right]
=\T\sqrt{\sqrt{\rho_{AB}} \left(\widetilde{\rho}_{AB}\right)_{mn}  \sqrt{\rho_{AB}}}.
\label{fidelity}
\end{equation}

Now, we are ready to have the following theorem.
\begin{Thm}
For any bipartite state $\rho_{AB} \in \mathcal B \left(\C^{d_1}\otimes \C^{d_2}\right)$,
\begin{align}
\mathcal C^a(\rho_{AB}) \leq&\sum_{m=1}^{D_1}\sum_{n=1}^{D_2}
\mathcal F \left[\rho_{AB}, \left(\widetilde{\rho}_{AB}\right)_{mn} \right]\nonumber\\
:=& \tau^a(\rho_{AB}),
\label{eq: CoAupper}
\end{align}
\label{Thm: CoAupper}
where $D_1=d_1(d_1-1)/2$, $D_2=d_2(d_2-1)/2$, and
$\left(\widetilde{\rho}_{AB}\right)_{mn}=\left(L_A^m \otimes L_B^n\right)
\rho_{AB}^*\left(L_A^m \otimes L_B^n\right)$.
\end{Thm}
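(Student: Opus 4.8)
The plan is to build the bound directly on the subspace decomposition of concurrence already recorded in Eq.~(\ref{pureconcurrence3}), which writes $\mathcal{C}^2(\ket{\psi}_{AB})$ as a sum of nonnegative contributions indexed by the pairs $(m,n)$. Since each such contribution lives entirely in a two-dimensional pair of subspaces, the strategy is to reduce the arbitrary-dimensional CoA to a sum of two-qubit CoA's, each of which is computable in closed form.

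First I would fix an arbitrary pure-state decomposition $\rho_{AB}=\sum_i\ket{\xi_i}_{AB}\bra{\xi_i}$ with $\ket{\xi_i}=\sqrt{p_i}\ket{\psi_i}$ and bound its average concurrence. The decisive analytic step is subadditivity of the square root, $\sqrt{\sum_{m,n}x_{mn}}\le\sum_{m,n}\sqrt{x_{mn}}$, applied to Eq.~(\ref{pureconcurrence3}): this pulls the sum over subspaces outside the square root and turns the average concurrence into a sum over $(m,n)$ of averaged sub-concurrences $\sum_i|\langle\xi_i|(L_A^m\otimes L_B^n)|\xi_i^*\rangle|$, exactly as displayed in Eq.~(\ref{ave}).

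Next I would take the maximum over all decompositions. Because the maximum of a sum is at most the sum of the maxima, $\max\sum_{m,n}(\cdots)\le\sum_{m,n}\max(\cdots)$, the maximization moves inside the $(m,n)$ sum, giving Eq.~(\ref{uppercoa1}). The key structural observation is then that each inner quantity in Eq.~(\ref{subCoA}) is the average of a genuine two-qubit concurrence: since $L_A^m=P_A^m L_A^m P_A^m$ and $L_B^n=P_B^n L_B^n P_B^n$, every matrix element only sees the projected vector $(P_A^m\otimes P_B^n)\ket{\xi_i}$, which lies in the subspace spanned by $\{\ket{i}_A,\ket{j}_A\}\otimes\{\ket{k}_B,\ket{l}_B\}$, and there $L_A^m\otimes L_B^n$ acts as the standard spin-flip conjugation. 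Hence the optimization in Eq.~(\ref{subCoA}) is precisely the CoA of the projected two-qubit state $(\rho_{AB})_{mn}=(P_A^m\otimes P_B^n)\rho_{AB}(P_A^m\otimes P_B^n)$.

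Finally I would invoke the closed-form two-qubit CoA result of~\cite{lve}, which evaluates this optimization as the fidelity $\mathcal{F}[\rho_{AB},(\widetilde{\rho}_{AB})_{mn}]$ of Eq.~(\ref{2CoA}); summing over all $(m,n)$ yields $\mathcal{C}^a(\rho_{AB})\le\sum_{m,n}\mathcal{F}[\rho_{AB},(\widetilde{\rho}_{AB})_{mn}]=\tau^a(\rho_{AB})$, as claimed. The main obstacle I expect is not any of the inequalities, which are routine, but the rigorous identification in the previous step: one must verify that restricting $\rho_{AB}$ to each two-dimensional pair of subspaces produces a legitimate (generally unnormalized) two-qubit density operator on which the Laustsen--Verstraete--van~Enk formula literally applies, and that $\rho_{AB}^*$ together with the flip operators $L_A^m\otimes L_B^n$ reproduces exactly the tilde-state appearing in the fidelity. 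A secondary point worth flagging is that the two relaxations used above --- square-root subadditivity and the $\max$--$\sum$ interchange --- are precisely what make the result an inequality rather than an identity; both collapse to equalities in the two-qubit case $D_1=D_2=1$, which is why $\tau^a$ is expected to saturate $\mathcal{C}^a$ there.
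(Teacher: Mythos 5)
Your proposal is correct and is essentially identical to the paper's own argument: the paper proves Theorem~\ref{Thm: CoAupper} exactly by the chain Eq.~(\ref{ave}) (square-root subadditivity applied to Eq.~(\ref{pureconcurrence3})), Eq.~(\ref{uppercoa1}) (interchanging the maximum with the sum over $(m,n)$), the identification of each term in Eq.~(\ref{subCoA}) with the CoA of the projected two-qubit state $\left(\rho_{AB}\right)_{mn}$, and the Laustsen--Verstraete--van~Enk fidelity formula Eq.~(\ref{2CoA}). Your added care about why the projected operators are legitimate two-qubit states on which that formula applies (via $L_A^m = P_A^m L_A^m P_A^m$, $L_B^n = P_B^n L_B^n P_B^n$) is a point the paper asserts without elaboration, so your write-up is, if anything, slightly more explicit on the same route.
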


For any bipartite mixed state $\rho_{AB}$ of arbitrary dimension, the sum of concurrences of all
possible two-qubit subspaces is known to provide a lower bound of the concurrence
$\mathcal C \left(\rho_{AB} \right)$~\cite{off}. By using the lower bound of the concurrence,
it was also shown that there is a proper monogamy inequality of entanglement in multipartite
quantum systems~\cite{off}.

Theorem~\ref{Thm: CoAupper} implies that the sum of CoA of $\left(\rho_{AB}\right)_{mn}$ from all possible
two-qubit subspaces of $\rho_{AB}$ forms an upper bound of $\mathcal C^a(\rho_{AB})$.
It can be also directly checked that this bound is saturated for any pure state and
two-qubit mixed state.

Now, we provide a polygamy inequality of multipartite entanglement
of arbitrary dimension in terms of the upper bound proposed in Theorem~\ref{Thm: CoAupper}.

\begin{Thm}
For any multipartite pure state $\ket{\psi}_{A_1\cdots A_n}$ in $\C^{d_1}\otimes \cdots\otimes \C^{d_n}$,
\begin{equation}
\left(\tau^a_{A_1(A_2\cdots A_n)}\right)^2 \leq \left(\tau^a_{A_1 A_2}\right)^2+\cdots +\left(\tau^a_{A_1 A_n}\right)^2.
\label{eq: npoly}
\end{equation}
where $\tau^a_{A_1(A_2\cdots A_n)}=\tau^a\left(\ket{\psi}_{A_1(A_2\cdots A_n)}\right)$
with respect to the bipartite cut $A_1-A_2\cdots A_n$, $\tau^a_{A_1A_k}=\tau^a(\rho_{A_1A_k})$
and $\rho_{A_1A_k}$ is the reduced density matrix of $\ket{\psi}_{A_1\cdots A_n}$ onto subsystem
$A_1A_k$ for $k=2, \ldots n$.
\label{Thm: npoly}
\end{Thm}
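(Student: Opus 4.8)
The plan is to first collapse the left-hand side to an ordinary concurrence. Since the bound of Theorem~\ref{Thm: CoAupper} is saturated on pure states, for the bipartite cut $A_1$ versus $A_2\cdots A_n$ one has $\tau^a_{A_1(A_2\cdots A_n)}=\mathcal C_{A_1(A_2\cdots A_n)}$, so $(\tau^a_{A_1(A_2\cdots A_n)})^2$ is exactly the pure-state concurrence of the cut. I would then expand it with Eq.~(\ref{pureconcurrence3}), reading $B=A_2\cdots A_n$ as a single $d_2\cdots d_n$-dimensional party: $(\tau^a_{A_1(A_2\cdots A_n)})^2=\sum_{m}\sum_{N}|\langle\psi|(L_{A_1}^m\otimes L_B^N)|\psi^*\rangle|^2$, where $m$ runs over the pairs $\{i,j\}$ in $A_1$ and $N$ over the pairs $\{\mathbf k,\mathbf l\}$ of product-basis states of $B$.

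Next I would organize the big-system pairs $N=\{\mathbf k,\mathbf l\}$ by the coordinates at which the multi-indices $\mathbf k,\mathbf l$ disagree, assigning each $N$ to a single party $A_p$ (say the first coordinate where they differ). The goal is a per-party estimate, namely that the contributions assigned to $A_p$ are at most $(\tau^a_{A_1A_p})^2$; summing over $p=2,\dots,n$ then produces Eq.~(\ref{eq: npoly}). To get each estimate I would pass to the reduced state $\rho_{A_1A_p}$ and use the two-qubit machinery of Eqs.~(\ref{subCoA})--(\ref{2CoA}): every pair $(m,n)$ of two-dimensional subspaces of $A_1$ and $A_p$ defines a two-qubit block $(\rho_{A_1A_p})_{mn}$ whose CoA is the fidelity $\mathcal F[\rho_{A_1A_p},(\widetilde\rho_{A_1A_p})_{mn}]$, and these fidelities sum to $\tau^a_{A_1A_p}$. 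Because the fidelity terms are non-negative, squaring the sum gives $(\tau^a_{A_1A_p})^2\ge\sum_{m,n}[\mathcal C^a((\rho_{A_1A_p})_{mn})]^2$, which supplies the sum-of-squares needed to match the expanded left-hand side, and Theorem~\ref{Thm: CoAupper} then upgrades each $\mathcal C^a$ to the corresponding $\tau^a$.

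The hard part is the last inequality in each per-party estimate: showing that the cut-concurrence contributions assigned to $A_p$ are dominated by the two-qubit CoA's of its blocks. The honest difficulty is the \emph{multi-index pairs}: a pair $\{\mathbf k,\mathbf l\}$ that differs from $A_p$ at several coordinates at once never appears in any matrix element of $\rho_{A_1A_p}$, since tracing out the remaining parties keeps only coherences sharing a common configuration. Such terms are invisible to the naive decomposition one gets by measuring the other parties in a product basis, and can be recovered only through the full optimization defining the CoA, i.e. through the optimal coherent decomposition encoded in the fidelity of Eq.~(\ref{2CoA}). I therefore expect the crux to be a block-by-block reduction to the multi-qubit polygamy inequality Eq.~(\ref{ndual}): restricting every party to a two-dimensional subspace yields a genuine multi-qubit pure state to which Eq.~(\ref{ndual}) applies, and the main technical labor is to sum these qubit inequalities over all subspace choices while controlling the resulting multiplicities, so that the left-hand sides reassemble $\mathcal C^2_{A_1(A_2\cdots A_n)}$ and the right-hand sides stay below $\sum_p(\tau^a_{A_1A_p})^2$. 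Verifying that the multi-index contributions are genuinely absorbed this way, with the inequality pointing in the correct direction throughout, is where I anticipate the real effort.
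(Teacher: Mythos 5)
Your plan's concrete endpoints do coincide with the paper's proof: the opening move (invoke the claimed pure-state saturation of Theorem~\ref{Thm: CoAupper} to replace $\tau^a_{A_1(A_2\cdots A_n)}$ by the cut concurrence $\mathcal{C}_{A_1(A_2\cdots A_n)}$) and the closing move (for each $k$, $\sum_{m,n}\left[\mathcal{C}^a\left(\left(\rho_{A_1A_k}\right)_{mn}\right)\right]^2 \leq \left[\sum_{m,n}\mathcal{C}^a\left(\left(\rho_{A_1A_k}\right)_{mn}\right)\right]^2 = \left(\tau^a_{A_1A_k}\right)^2$ via Eq.~(\ref{2CoA})) are exactly the first and last lines of the paper's argument. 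But everything in between---the only part with mathematical content---is absent. Your proposed intermediate step, namely that after assigning each composite-basis pair $N=\{\mathbf{k},\mathbf{l}\}$ to the first party at which $\mathbf{k}$ and $\mathbf{l}$ differ, the contributions assigned to $A_p$ are dominated by $\left(\tau^a_{A_1A_p}\right)^2$, is never proved: you correctly observe that multi-index pairs leave no trace in $\rho_{A_1A_p}$, and then state the hope that a reduction to Eq.~(\ref{ndual}) with controlled multiplicities will absorb them. That hope \emph{is} the theorem; identifying where the effort lies is not the same as supplying it. It is also doubtful that your per-party lemma can be extracted from Eq.~(\ref{ndual}) in the way you suggest, because that inequality inherently splits each qubit block's concurrence across \emph{all} parties simultaneously, not along any pre-assigned labeling of pairs.

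The paper's mechanism, which your proposal lacks, never assigns pairs to parties. It expands $\mathcal{C}^2_{A_1(A_2\cdots A_n)}$ over $n$-qubit blocks labelled by $m_1$ together with a tuple $M=(m_2,\ldots,m_n)$ of one two-dimensional subspace \emph{per party}, Eq.~(\ref{nconcurrence}); a pair differing at several coordinates then simply lies inside the blocks whose local pairs match those coordinates, so no party is singled out. Each block is a genuine $n$-qubit pure state, so the multi-qubit polygamy inequality Eq.~(\ref{ndual}) applies inside it and distributes its squared concurrence over the CoAs of its two-qubit reductions, Eq.~(\ref{m1Mpoly}); re-summing over $(m_1,M)$, using $\sum_i x_i^2\leq\left(\sum_i x_i\right)^2$ for non-negative terms, and invoking Eq.~(\ref{2CoA}) converts each party's total into $\left(\tau^a_{A_1A_k}\right)^2$. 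This block decomposition plus the block-wise use of Eq.~(\ref{ndual}) is the missing idea in your write-up. For what it is worth, your instinct that the multiplicities are the delicate point is sound---even in the paper, Eq.~(\ref{nconcurrence}) overcounts pairs agreeing on some coordinates when $d_k>2$, and the block reductions in Eq.~(\ref{m1Mpoly}) depend on all of $M$, not only on $m_k$---but the paper at least supplies the combinatorial structure in which that bookkeeping can be attempted, whereas your proposal stops before the bookkeeping begins.
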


\begin{proof}
For $k=1,\ldots, n$, let $m_k=(i_k, j_k)$ be an ordered pair of $i_k, j_k \in \{ 1, \ldots, d_k\}$
such that $i_k < j_k$, and  let $M=\left(m_2, \ldots, m_n \right)$ be an $(n-1)$-tuple of the ordered pairs.
By letting $D_k=d_k(d_k-1)/2$ for $k=1,\ldots, n$, we have
\begin{align}
\left(\tau^a_{A_1(A_2\cdots A_n)}\right)^2=&\mathcal {C}^2_{A_1(A_2\cdots A_n)}\nonumber\\
=&\sum_{m_1}^{D_1}\sum_{M=1}^{D_2\cdots D_n}
\left[\left(\mathcal{C}_{A_1(A_2\cdots A_n)}\right)_{m_1M}\right]^2
\label{nconcurrence}
\end{align}
where
$\left(\mathcal{C}_{A_1(A_2\cdots A_n)}\right)_{m_1M}$ is the concurrence of the (unnormalized) state
$\left(\ket{\psi}_{A_1(A_2\cdots A_n)}\right)_{m_1M}$ in the subspace spanned by
$\{\ket{i_1}_{A_1}, \ket{j_1}_{A_1}\},  \ldots , \{\ket{i_n}_{A_n}, \ket{j_n}_{A_n}\}$.

As $\left(\ket{\psi}_{A_1(A_2\cdots A_n)}\right)_{m_1M}$ is an $n$-qubit (unnormalized) state for each $m_1$ and $M$,
it satisfies the multi-qubit polygamy inequality in Eq.~(\ref{ndual}), that is,
\begin{widetext}
\begin{align}
\left[\left(\mathcal{C}_{A_1(A_2\cdots A_n)}\right)_{m_1M}\right]^2\leq
\left[\left(\mathcal{C}^a_{A_1A_2}\right)_{m_1m_2}\right]^2 %\nonumber\\
+\cdots + \left[\left(\mathcal{C}^a_{A_1A_n}\right)_{m_1m_n}\right]^2,
\label{m1Mpoly}
\end{align}
\end{widetext}
where $\left(\mathcal{C}^a_{A_1A_k}\right)_{m_1m_k}$ is CoA of $\left(\rho_{A_1A_k}\right)_{m_1m_k}$,
the reduced operator of $\left(\ket{\psi}_{A_1(A_2\cdots A_n)}\right)_{m_1M}$ onto
subsystems $A_1A_k$ for $k=2,\ldots, n$.
From Eqs.~(\ref{nconcurrence}) and~(\ref{m1Mpoly}), we have
\begin{widetext}
\begin{align}
\left(\tau^a_{A_1(A_2\cdots A_n)}\right)^2\leq& \sum_{m_1=1}^{D_1}\sum_{m_2=1}^{D_2}\left[\left(\mathcal{C}^a_{A_1A_2}\right)_{m_1m_2}\right]^2
+\cdots+\sum_{m_1=1}^{D_1}\sum_{m_n=1}^{D_n}\left[\left(\mathcal{C}^a_{A_1A_n}\right)_{m_1m_n}\right]^2\nonumber\\
\leq& \left[\sum_{m_1=1}^{D_1}\sum_{m_2=1}^{D_2}\left(\mathcal{C}^a_{A_1A_2}\right)_{m_1m_2}\right]^2
+\cdots+\left[\sum_{m_1=1}^{D_1}\sum_{m_n=1}^{D_n}\left(\mathcal{C}^a_{A_1A_n}\right)_{m_1m_n}\right]^2\nonumber\\
=& \left(\tau^a_{A_1 A_2}\right)^2+\cdots +\left(\tau^a_{A_1 A_n}\right)^2,
\end{align}
\end{widetext}
where the last equation is due to Eq.~(\ref{2CoA}) and the definition of $\tau^a$.
\end{proof}

As the upper bound of CoA in Theorem~\ref{Thm: CoAupper} is saturated for any two-qubit mixed state,
Eq.~(\ref{eq: npoly}) in Theorem~\ref{Thm: npoly} is reduced to Eq.~(\ref{ndual}) for the case of multi-qubit systems.
Moreover, it can be easily seen that Eq.~(\ref{eq: npoly}) is saturated by $n$-qubit generalized
W-class states~\cite{ks}, that is,
\begin{equation}
\ket{W}_{A_1\cdots A_n} =a_1 \ket{1\cdots0}_{A_1\cdots A_n}+\cdots+a_{n} \ket{0\cdots1}_{A_1\cdots A_n},
\label{nWclass}
\end{equation}
with $\sum_{i=1}^{n}|a_i|^2 =1$.

To summarize, we have shown the polygamous nature of
distributed entanglement in multipartite quantum systems of arbitrary
dimension. By providing an analytic upper bound of CoA for arbitrary bipartite quantum states,
we have derived a polygamy inequality of entanglement in terms of the upper bound.
This upper bound is saturated for any two-qubit state, and thus the polygamy inequality proposed here
can be considered as a generalization of the result in~\cite{gbs} into higher-dimensional quantum systems.

One of the main difficulties in the study of multipartite entanglement
is that there can be several inequivalent classes that are not convertible to each other
under {\em Stochastic Local operations and classical communications} (SLOCC)~\cite{DVC}.
These inequivalent classes makes us hardly have an universal way of quantifying multipartite entanglement,
even in an abstract sense.

However, the existence of inequivalent classes of multipartite entanglement also reveals the
different characters among different classes. For example, three-qubit
systems are known to have two inequivalent classes of genuine tripartite entanglement:
one is the Greenberger-Horne-Zeilinger (GHZ) class~\cite{GHZ},
and the other one is the W-class~\cite{DVC}.
Although the inequivalentness of the classes is due to SLOCC convertibility~\cite{DVC},
these two classes also show extreme differences in terms of monogamy or polygamy inequalities of entanglement.
In other words, monogamy and polygamy inequalities are saturated by W-class states, whereas the differences between
terms in the inequalities can assume their largest values for the GHZ-class state.
Thus, MoE and PoE are not only just distinct phenomena in multipartite quantum systems, they also
provide us an efficient way of qualifying multipartite entanglement.

Our result is, we believe, the first case where the polygamy nature of multipartite entanglement
in arbitrary-dimensional quantum systems is discussed with mathematical characterizations.
Noting the importance of the study on high-dimensional multipartite entanglement,
our result can provide a rich reference for future work on the study of multipartite entanglement.

\section*{Acknowledgments}
This work is supported by iCORE, MITACS and USARO.
%%%%%%%%%%%%%%%%%%%%%%%%%%%%%%%%%%%%%%%%%%%%%%%%%%%%%%%%%%%%%%%%%%%%%%%%

\end{document}